\documentclass[
submission
]{dmtcs-episciences}


\usepackage[utf8]{inputenc}
\usepackage{subfigure}

\newtheorem{theorem}{Theorem}
\newtheorem{definition}{Definition}

%


\usepackage[round]{natbib}

\author{Mircea Parpalea\affiliationmark{1}
  \and Nicoleta Avesalon\affiliationmark{2}
  \and Eleonor Ciurea\affiliationmark{2}}
\title[Minimum parametric flow over time] {Minimum parametric flow over time}
\affiliation{
  Andrei \c Saguna National College, Bra\c sov, Romania\\
  Transilvania University of Bra\c sov, Romania}
\keywords{dynamic networks, parametric flow, partitioning algorithm}
\begin{document}
\publicationdetails{VOL}{2015}{ISS}{NUM}{SUBM}
\maketitle
\begin{abstract}
  The paper presents a dynamic solution method for dynamic minimum parametric networks flow. The solution method solves the problem for a special parametric dynamic network with linear lower bound functions of a single parameter. Instead directly work on the original network, the method implements a labelling algorithm in the parametric dynamic residual network and uses quickest paths from the source node to the sink node in the time-space network along which repeatedly decreases the dynamic flow for a sequence of parameter values, in their increasing order. In each iteration, the algorithm computes both the minimum flow for a certain subinterval of the parameter values, and the new breakpoint for the maximum parametric dynamic flow value function.
\end{abstract}

\section{Introduction}
\label{sec:in}
Dynamic flow problems where networks structure changes depending on a scalar parameter $\lambda$ are widely used (see for example \cite{2}) to model different network-structured, decision-making problems over time. These types of problems are arising in various real applications such as communication networks, air/road traffic control, and production systems. Moreover, in many applications of graph algorithms, including communication networks, graphics, assembly planning, and scheduling, graphs are subject to discrete changes, such as additions or deletions of arcs or nodes. In the last decade there has been a growing interest for such dynamically changing graphs, and a whole body of algorithms and data structures for dynamic graphs has been discovered: \cite{5}, \cite{7}, \cite{10}, \cite{14} or \cite{16}. 
Further on, the next section presents some basic discrete-time dynamic networks terminology and notations and Section \ref{sec:two} introduces the parametric minimum flow over time problem. In Section \ref{sec:three} the algorithm for solving the parametric minimum flow in discrete dynamic networks is presented and Section \ref{sec:four} ilustrates how the algorithmon work on a given dynamic network.

\clearpage

\section{Discrete-time dynamic network}
\label{sec:one}
A discrete dynamic network $G=(N,A,T)$ is a directed graph with $N$ being a set of $\vert N \vert=n$ nodes $i$, $A$ being a set of $\vert A \vert=m$ arcs $a$ and $T$ being the finite time horizon discretized into the set $H=\{0,1,\cdots, T\}$. An arc $a\in A$ from node $i$ to node $j$ is denoted by $(i,j)$. Parallel, as well as opposite arcs are not allowed in graph $G$. The following time-dependent functions are associated with each arc $a=(i,j)\in A$: the \textit{upper bound (capacity)} function $u(i,j;\theta)$, $u:A\times H \rightarrow\Re^{+}$, representing the maximum amount of flow that can enter the arc $(i,j)$ at time $\theta$, the \textit{lower bound} function $\ell(i,j;\theta)$, $\ell:A\times H \rightarrow\Re^{+}$, i.e. the minimum amount of flow that must enter the arc $(i,j)$ at time $\theta$, and the \textit{transit time} function $h(i,j;\theta)$, $h:A\times H \rightarrow \aleph$. Time is measured in discrete steps, so that if one unit of flow leaves node $i$ at time $\theta$ over the arc $a=(i,j)$ that one unit of flow arrives at node $j$ at time $\theta+h(i,j;\theta)$, where $h(i,j;\theta)$ is the transit time of the arc $(i,j)$. The time horizon $T$ represents the time limit until which the flow can travel in the network. The network has two special nodes: a source node $s$ and a sink node $t$.

\subsection{Time-space network}
\label{sec:one_1} 
For a given discrete-time dynamic network, the \textit{time-space network} is a static network constructed by expanding the original network in the time dimension, by considering a separate copy of every node $i\in N$ at every discrete time step $\theta \in H$. A \textit{node-time pair} (NTP) $(i, \theta)$ refers to a particular node $i \in N$ at a particular time step $\theta \in H$, i.e., $(i,\theta)\in N \times H$.

\begin{definition} \label{D1}
The \textit{time-space network} $G^{T}$ of the original dynamic network $G$ is defined as follows:\ \newline
$(a) \;  N^{T}:=\{ (i,\theta) \vert i \in N, \theta \in H \}$; \newline
$(b) \;  A^{T}:=\{ a_{\theta}=((i,\theta),(j,\theta+h(i,j;\theta))) \vert 0 \leq \theta \leq T-h(i,j;\theta), (i,j)\in A \}$; \newline
$(c) \;  u^{T}(a_{\theta}):=u(a;\theta) \qquad for \quad a_{\theta}\in A^{T}$;  \newline
$(d) \;  \ell^{T}(a_{\theta}):=\ell(a;\theta)  \qquad for \quad  a_{\theta}\in A^{T}$.
\end{definition}

For every arc $(i,j) \in A$ with traversal time $h(i,j; \theta)$, capacity $u(i,j; \theta)$ and lower bound $\ell (i,j; \theta)$, the time-space network $G^{T}$ contains the arcs $((i,\theta),(j,\theta+h(i,j;\theta)))$, $\theta =0,1,\cdots, T-h(i,j;\theta)$ with capacities $u(i,j;\theta)$ and lower bounds $\ell(i,j;\theta)$. A (discrete-time) \textit{dynamic path} $\bar{P}$ is defined as a sequence of distinct, consecutively linked NTPs.

\subsection{Parametric dynamic network}
\label{sec:one_2} 
\begin{definition} \label{D2}
A discrete-time dynamic network $G=(N,A,T)$ for which the lower bounds $\ell(i,j;\theta)$ of some arcs $(i,j)\in A$ are functions of a real parameter $\lambda$ is referred to as a \textit{parametric dynamic network} and is denoted by $\bar{G}=(N,A,u,\bar{\ell},h,T)$.
\end{definition}
	
For a parametric dynamic network $\bar{G}$, the parametric lower bound function $\bar{\ell}:A\times H \times [0,\Lambda]  \rightarrow\Re^{+}$ associates to each arc $(i,j)\in A$ and for each of the parameter values $\lambda$ in an interval $[0,\Lambda]$ the real number $\bar{\ell}(i,j;\theta;\lambda)$, referred to as the \textit{parametric lower bound} of arc $(i,j)$:
\begin{equation} \label{1} \bar{\ell}(i,j;\theta;\lambda)=\ell_{0}(i,j;\theta)+\lambda \cdot \mathcal{L}(i,j;\theta), \quad \lambda \in [0,\Lambda], \quad \theta\in H,  \end{equation}
where $\mathcal{L}:A\times H \rightarrow\Re$ is a real valued function, associating to each arc $(i,j)\in A$ and for each time-value $\theta \in H$ the real number $\mathcal{L}(i,j;\theta)$, referred to as the \textit{parametric part of the lower bound} of the arc $(i,j)$. The nonnegative value $\ell_{0}(i,j;\theta)$ is the lower bound of the arc $(i,j)$ for $\lambda=0$, i.e., $\bar{\ell}(i,j;\theta;0)=\ell_{0}(i,j;\theta)$. For the problem to be correctly formulated, the lower bound function of every arc $(i,j)\in A$  must respect the condition $u(i,j;\theta)\geq \bar{\ell}(i,j;\theta;\lambda)$ for the entire interval of the parameter values, i.e., $\forall (i,j)\in A$, $\forall \theta\in H$ and $\forall \lambda \in [0,\Lambda]$. It follows that $u(i,j;\theta)\geq \ell_{0}(i,j;\theta)$ and the parametric part of the lower bounds $\mathcal{L}(i,j;\theta)$ must satisfy the constraints: $\mathcal{L}(i,j;\theta)\leq [u(i,j;\theta)-\ell_{0}(i,j;\theta)]/\Lambda$, $\forall (i,j)\in A$. 

\section{Parametric flow over time}
\label{sec:two} 
The \textit{parametric dynamic flow value function} $\bar{v}: N\times H\times [0,\Lambda]\rightarrow \Re$ associates to each of the nodes $i \in N$, at each time moment $\theta \in H$, a real number $\bar{v}(i;\theta;\lambda)$ referred to as the value of node $i$ at time $\theta$, for each of the parameter $\lambda$ values.

\begin{definition} \label{D3}
A \textit{feasible parametric flow over time} $\bar{f}(i,j;\theta;\lambda)$ in a parametric dynamic network $\bar{G}=(N, A, u, \bar{\ell}, h, T)$ is a function $\bar{f}: A\times H\times [0,\Lambda]\rightarrow \Re^{+}$ that satisfies the following constraints for all $\lambda \in [0,\Lambda]$:

\begin{equation} \label{2} \sum _{j|(i,j)\in A} \bar{f}(i,j;\theta;\lambda)-\sum _{j|(j,i)\in A}\sum ^{\vartheta|\vartheta+h(j,i;\vartheta)=\theta} \bar{f}(j,i;\vartheta;\lambda)= \{ \begin{tabular}{ll} $\bar{v}(i;\theta;\lambda)$, & $i=s,t$ \\ 0, & $i\neq s,t$  \end{tabular}, \forall i \in N, \forall \theta \in H; \end{equation}

\begin{equation} \label{3} \bar{\ell}(i,j;\theta;\lambda)\leq \bar{f}(i,j;\theta;\lambda) \leq u(i,j;\theta), \qquad \forall (i,j)\in A, \quad \forall \theta\in H; \end{equation}
\begin{equation} \label{4} \bar{f}(i,j;\theta;\lambda)=0, \qquad \forall (i,j)\in A, \quad \forall \theta\in [T-h(i,j;\theta)+1,\quad T]; \end{equation}
\begin{equation} \label{5} \sum _{\theta \in H} \bar{v}(s;\theta;\lambda)= - \sum _{\theta \in H} \bar{v}(t;\theta;\lambda)=\bar{v}(\lambda); \end{equation}
where $\bar{f}(i,j;\theta;\lambda)$ determines the rate of flow (per time unit) entering arc $(i,j)$ at time $\theta$, for the parameter value $\lambda$, $\forall \theta\in \{0,1,\cdots, T\}$ and $\forall \lambda \in [0,\Lambda]$. 
  \end{definition}

\begin{definition} \label{D4} The \textit{parametric minimum flow over time} (PmFT) problem is to compute all minimum flows over time for every possible value of $\lambda$:
\begin{equation} \label{6} minimise \quad \bar{v}(\lambda)=\sum _{{\theta \in \{0,1,\cdots, T\}}} \bar{v}(s;\theta;\lambda), \quad for \quad all \quad \lambda \in [0,\Lambda],  \end{equation} under flow constraints (\ref{2})-(\ref{5}). \end{definition}

\begin{definition} \label{D5} For the minimum flow over time problem, the \textit{parametric time-dependent residual network} with respect to a given feasible parametric flow over time $\bar{f}$ is defined as $\bar{G}(\bar{f}):=(N,A(\bar{f}),T)$, with $A(\bar{f}):=A^{+}(\bar{f}) \cup A^{-}(\bar{f})$, where 
\begin{equation} \label{7} A^{+}(\bar{f}):=\{(i,j) \vert (i,j)\in A, \exists \theta \leq T-h(i,j;\theta) \quad with \quad \bar{f}(i,j;\theta;\lambda) - \bar{\ell}(i,j;\theta;\lambda) >0 \}; \end{equation}
\begin{equation} \label{8} A^{-}(\bar{f}):=\{(i,j) \vert (j,i)\in A, \exists \theta \leq T-h(j,i;\theta) \quad with \quad u(j,i;\theta)-\bar{f}(j,i;\theta;\lambda) >0 \}.   \end{equation}
\end{definition}

The direct arcs $(i,j)\in A^{+}(\bar{f})$ in $\bar{G}(\bar{f})$ have same transit times $h(i,j;\theta)$ with those in $\bar{G}$ while the reverse arcs $(i,j)\in A^{-}(\bar{f})$ have negative transit times $h(i,j;\theta+h(j,i;\theta))=-h(j,i;\theta)$, with $(j,i)\in A$ and $0 \leq \theta+h(j,i; \theta) \leq T$. 

\begin{definition} \label{D6} For the parametric minimum flow over time (PmFT) problem, the \textit{parametric residual capacities} of arcs $(i,j)$ in the parametric time-dependent residual network $\bar{G}(\bar{f})$ are defined as follows: 
\begin{equation} \label{9} \bar{r}(i,j;\theta;\lambda)=\bar{f}(i,j;\theta;\lambda) - \bar{\ell}(i,j;\theta;\lambda), \quad (i,j)\in A, \quad 0 \leq \theta+h(i,j; \theta) \leq T; \end{equation}
\begin{equation} \label{10} \bar{r}(i,j;\theta+h(j,i; \theta);\lambda)= u(j,i;\theta) - \bar{f}(j,i;\theta;\lambda) , \quad (j,i)\in A, \quad 0 \leq \theta+h(j,i; \theta) \leq T.  \end{equation}
\end{definition}

\begin{definition} \label{D7} Given a parametric flow over time $\bar{f}(i,j;\theta;\lambda)$, the \textit{parametric residual capacity} $\bar{r}(\bar{P};\theta;\lambda)$ of a dynamic path $\bar{P}$ is the inner envelope of the parametric residual capacity functions $\bar{r}(i,j;\theta;\lambda)$ for all arcs $(i,j)$ composing the path and for all parameter $\lambda$ values: 
\begin{equation} \label{11} \bar{r}(\bar{P};\theta;\lambda)= \min \{ \bar{r}(i,j;\theta;\lambda) \vert (i,j)\in \bar{P} \}. \end{equation} \end{definition}

\begin{definition} \label{D8} The \textit{transit time} $\tau(\bar{P};\theta)$ of a dynamic path $\bar{P}$ is defined by: 
\begin{equation} \label{12}  \tau(\bar{P})= \sum_{(i,j)\in \bar{P}} h(i,j;\theta). \end{equation} \end{definition}

A dynamic path $\bar{P}$ is referred to as the \textit{quickest dynamic path} if $\tau(\bar{P}) \leq \tau(\bar{P}')$ for all dynamic paths $\bar{P}'$ in the parametric time-dependent residual network $\bar{G}(\bar{f})$.

\section{Parametric minimum dynamic flow}
\label{sec:three} 
The approaches for solving the maximum parametric flow over time problem via applying classical algorithms can be grouped in two main categories: by applying a classical parametric flow algorithm (see \cite{6} and \cite{15}) in the static time-space network (for this approach see the algorithm presented in \cite{13}) or by applying a non-parametric maximum dynamic flow algorithm (see the algorithms presented in \cite{12}) in dynamic residual networks generated by partitioning the interval of the parameter values (partitioning method presented by \cite{3}). Given the powerful versatility of dynamic algorithms, it is not surprising that these algorithms and dynamic data structures are often more difficult to design and analyse than their static counterparts (see \cite{11}). \cite{4} proved that the complexity of finding a shortest dynamic flow augmenting path, by exploring the forward and reverse arcs successively, is $O(nmT^{2})$. For algorithms which explores the two sub-networks simultaneously, \cite{8} also reported a complexity of $O(n^{2}T^{2})$. By using special node addition and selection procedures, \cite{9} succeeded to reduce significantly the number of node time pair that needs to be visited. The worst-case complexity of their algorithm is $O(nT(n+T))$. As far as we know, the problem of minimum flow over time in parametric networks has not been treated yet. 

\subsection{Parametric minimum dynamic flow (PmDF) algorithm}
\label{sec:three_1}
The idea of the algorithm is that if the parametric residual capacities for all arcs in $\bar{G}(\bar{f})$ are maintained linear functions of $\lambda$, with no break points, the problem can be solved via a slightly modified non-parametric algorithm.
Firstly, if one exists, a feasible flow must be established. The most convenient is this to be done in the nonparametric network $G'=(N,A,u,\ell',h,T)$ obtained from the initial network by replacing the parametric lower bound functions with the non-parametric ones: $\ell'(i,j;\theta)=max \{ \ell(i,j;;\theta;\lambda) \vert \lambda \in [0,\Lambda] \}$, i.e. $\ell'(i,j;\theta)=\ell_{0}(i,j;\theta)$ for $\mathcal{L}(i,j;\theta)\leq 0$ and $\ell'(i,j;\theta)=\ell_{0}(i,j;\theta)+ \Lambda \cdot \mathcal{L}(i,j;\theta)$ for $L(i,j;\theta)> 0$. For finding a feasible flow $\bar{f}(i,j;\theta)$ in $\bar{G}$ see the algorithms presented in \cite{1}, applied in the static time-space network. During the running of the algorithm, the subinterval of the parameter values is continuously narrowed so that the above restriction to remain valid. In each subinterval $[\lambda_{k},\lambda_{k+1}]$ of the parameter values, the linear parametric residual capacity of every arc $(i,j)$ can be written as $\bar{r}_{k}(i; j;\theta;\lambda)=\alpha_{k}(i,j;\theta)+\beta_{k}(i,j;\theta)\cdot (\lambda -\lambda_{k})$ while the parametric flow is $\bar{f}_{k}(i; j;\theta;\lambda)=f_{k}(i,j;\theta)+F_{k}(i,j;\theta)\cdot (\lambda -\lambda_{k})$.
As soon as the parametric time-dependent residual network $\bar{G}(\bar{f})$ contains no dynamic paths, the algorithm computes the minimum flow for the considered subinterval and then reiterates on the next subinterval of the parameter values, until $\Lambda$ value is reached.  

\footnotesize \begin{flushleft}
\hspace{0,5 cm}(01) PmDF ALGORITHM;

\hspace{0,5 cm}(02) BEGIN

\hspace{0,5 cm}(03) \quad  find a feasible flow $\bar{f}(i,j;\theta)$ in network $\bar{G}$;

\hspace{0,5 cm}(04) \quad $BP:=\{0\}$; \quad  $k:=0$; \quad  $\lambda_{k}:=0$;

\hspace{0,5 cm}(05) \quad REPEAT

\hspace{0,5 cm}(06) \quad \quad QDP($k,\lambda_{k}, BP$);

\hspace{0,5 cm}(07) \quad  \quad $k:=k+1$;

\hspace{0,5 cm}(08) \quad  UNTIL($\lambda_{k}=\Lambda$);

\hspace{0,5 cm}(09) END.  

\hspace{0,5 cm}

\hspace{1,5 cm} \small Algorithm 1: Parametric minimum dynamic flow (PmDF) algorithm.  \end{flushleft} \normalsize

\subsection{Quickest Dynamic Paths (QDP) procedure}
\label{sec:three_2}
The successive shortest (quickest) dynamic paths procedure repeatedly performs the following operations:\newline
-	Finds a shortest (quickest) dynamic path $\bar{P}$ in the time-dependent residual network $\bar{G}(\bar{f})$ based on the successor vector $\sigma:N\times H \rightarrow (i,\theta)\in N \times H$;\newline
-	Computes the parametric residual capacity $\bar{r}(\bar{P};\theta;\lambda)$ of the dynamic path $\bar{P}$;\newline
-	Computes the first (in increasing order) value of the parameter $\lambda$ up to which the parametric residual capacity of the dynamic path remains linear without break points;\newline
-	Updates the subinterval of the parameter values for which the minimum flow is determined;\newline
-	Decreases the flow along the dynamic path and updates the time-dependent residual network.\newline 
The algorithm ends when none of the source nodes, at any time moments $\theta \in \{ 0,1,\cdots,T\}$, is reachable from any of the sink nodes, i.e. there is no dynamic path from $s$ to $t$.

\footnotesize \begin{flushleft}
\hspace{0,5 cm}(01) PROCEDURE QDP($k,\lambda_{k}, BP$);

\hspace{0,5 cm}(02) BEGIN

\hspace{0,5 cm}(03) \quad  FOR all $\theta \in H$ DO

\hspace{0,5 cm}(04) \quad \hspace{0,2 cm} BEGIN

\hspace{0,5 cm}(05) \quad \hspace{0,5 cm} FOR all $i \in N$ DO $\sigma(i,\theta):=(0,0)$;

\hspace{0,5 cm}(06) \quad \hspace{0,5 cm} FOR all $(i,j) \in A$ DO 

\hspace{2,4 cm} $\alpha_{k}(i,j;\theta):=\bar{f}(i,j;\theta)-\ell_{0}(i,j;\theta)-\lambda_{k}\cdot \mathcal{L}(i,j;\theta)$; $\beta_{k}(i,j;\theta):= - \mathcal{L}(i,j;\theta)$; 

\hspace{2,4 cm} $\alpha_{k}(j,i;\theta+h(i,j;\theta)):=u(i,j;\theta)-\bar{f}(i,j;\theta)$; $\beta_{k}(j,i;\theta+h(i,j;\theta)):=0$;

\hspace{2,4 cm} $f_{k}(i,j;\theta):=\bar{f}(i,j;\theta)$; $F_{k}(i,j;\theta):=0$;

\hspace{0,5 cm}(07) \quad  \hspace{0,2 cm} END;

\hspace{0,5 cm}(08) \quad  $C:=1$; $\lambda_{k+1}=\Lambda$;

\hspace{0,5 cm}(09) \quad  \textbf{LS}($\sigma,C$);

\hspace{0,5 cm}(10) \quad  WHILE($C=1$) DO

\hspace{0,5 cm}(11) \quad \hspace{0,2 cm} BEGIN

\hspace{0,5 cm}(12) \quad \hspace{0,5 cm} build $\bar{P}$ based on $\sigma$ starting from $(s,\bar{\theta})$;

\hspace{0,5 cm}(13) \quad  \hspace{0,5 cm} $\alpha := \min \{ \alpha_{k}(i,j;\theta) \vert (i,j) \in \bar{P} \}$; $\beta:=\min \{ \beta_{k}(i,j;\theta) \vert (i,j) \in \bar{P}, \; \alpha_{k}(i,j;\theta) =\alpha \}$;

\hspace{0,5 cm}(14) \quad  \hspace{0,5 cm} $(i,\theta):=(s,\bar{\theta})$;

\hspace{0,5 cm}(15) \quad  \hspace{0,5 cm} WHILE($i \neq t$) DO

\hspace{0,5 cm}(16) \quad  \hspace{0,7 cm} BEGIN

\hspace{0,5 cm}(17) \quad  \hspace{1 cm} $(j,\vartheta):=\sigma(i,\theta)$;

\hspace{0,5 cm}(18) \quad  \hspace{1 cm} IF($\beta_{k}(i,j;\theta)<\beta$) THEN $\lambda_{k+1}:=\min \{\lambda_{k+1}, \lambda_{k}+(\alpha_{k}(i,j;\theta)-\alpha)/(\beta - \beta_{k}(i,j;\theta)) \}$;

\hspace{0,5 cm}(19) \quad  \hspace{1 cm} $\alpha_{k}(i,j;\theta):=\alpha_{k}(i,j;\theta)-\alpha$; $\beta_{k}(i,j;\theta):=\beta_{k}(i,j;\theta)-\beta$;

\hspace{2,4 cm} $\alpha_{k}(j,i;\vartheta):=\alpha_{k}(j,i;\vartheta)+\alpha$; 

\hspace{2,4 cm} $\beta_{k}(j,i;\vartheta):=\beta_{k}(j,i;\vartheta)+\beta$;

\hspace{0,5 cm}(20) \quad  \hspace{1 cm}  IF($(i,j)\in A^{+}(\bar{f})$) THEN $f_{k}(i,j;\theta):=f_{k}(i,j;\theta)-\alpha$; $F_{k}(i,j;\theta):=F_{k}(i,j;\theta)-\beta$;

\hspace{0,5 cm}(21) \quad  \hspace{1 cm}  ELSE $f_{k}(j,i;\vartheta):=f_{k}(j,i;\vartheta)+\alpha$; $F_{k}(j,i;\vartheta):=F_{k}(j,i;\vartheta)+\beta$;

\hspace{0,5 cm}(22) \quad  \hspace{1 cm}   $(i,\theta):=(j,\vartheta)$;

\hspace{0,5 cm}(23) \quad  \hspace{0,7 cm} END;

\hspace{0,5 cm}(24) \quad  \hspace{0,5 cm} FOR all $\theta \in H$ DO FOR all $i \in N$ DO $\sigma(i,\theta):=(0,0)$;

\hspace{0,5 cm}(25) \quad  \hspace{0,5 cm} \textbf{LS}($\sigma,C$);

\hspace{0,5 cm}(26) \quad  \hspace{0,2 cm} END;

\hspace{0,5 cm}(27) \quad  $BP:=BP\cup \{\lambda_{k+1}\}$;

\hspace{0,5 cm}(28) END. 

\hspace{0,5 cm}

\hspace{1,5 cm} \small Algorithm 2: Quickest Dynamic Paths (QDP) procedure.  \end{flushleft}  \normalsize 

\subsection{Labels Setting (LS) procedure}
\label{sec:three_3}
The label setting procedure uses \textit{transit time labels} $\tau(i,\theta)$ associated to all nodes at each discrete time values. \newline
At any step of the algorithm, a label is \textit{permanent} once it denotes the length of shortest augmenting path to a node-time pair, otherwise it is \textit{temporary}. The \textit{LS} procedure maintains a set $L$ of candidate nodes in increasing order of their temporary labels, which initially includes only the sink nodes $(t,\theta)$, $\theta \in \{0,1,\cdots, T\}$. For every node-time pair $(j,\vartheta)$ selected from the list, the arcs with positive residual capacity connecting $(i,\theta)$ to $(j,\vartheta)$ are explored, where $\vartheta=\theta+h(i,j;\theta)$ if the arc connecting $(i,\theta)$ to $(j,\vartheta)$ is a forward arc and $0\leq \theta=\vartheta+h(j,i;\vartheta)$ if $(i,j)$ is a reverse arc. At any iteration, the algorithm selects the node-time pair $(i,\theta)$ with the minimum temporary label, makes its transit time label permanent, checks optimality conditions and updates the labels accordingly. A transit time label $\tau(i,\theta)$ represents the length (transit time) of a shortest (quickest) dynamic path from $(i,\theta)$ if $\tau(i,\theta) \leq \tau(j,\vartheta)+h(j,i;\vartheta)$, $\forall (i,j)\in A(\bar{f})$. The process is repeated until there are no more candidate nodes in $L$. The transit time of the shortest (quickest) path $\bar{P}$ computed based on successor vector $\sigma$ is given by $\bar{\tau}$.

\footnotesize \begin{flushleft}
\hspace{0,5 cm}(01) PROCEDURE LS($\sigma, C$);

\hspace{0,5 cm}(02) BEGIN

\hspace{0,5 cm}(03) \quad  FOR all $\theta \in \{0, 1, \cdots, T\}$ DO

\hspace{0,5 cm}(04) \quad \hspace{0,2 cm} BEGIN

\hspace{0,5 cm}(05) \quad \hspace{0,5 cm} FOR all $i \in N-{t}$ DO $\tau(i,\theta):=\infty$; 

\hspace{0,5 cm}(06) \quad \hspace{0,5 cm}  $\tau(t,\theta):=0$; $L:=L \cup \{(t,\theta) \}$;

\hspace{0,5 cm}(07) \quad  \hspace{0,2 cm} END;

\hspace{0,5 cm}(08) \quad  $\bar{\tau}:=\infty$; $\bar{\theta}:=\infty$

\hspace{0,5 cm}(09) \quad  WHILE($L\neq \emptyset$) DO

\hspace{0,5 cm}(10) \quad  \hspace{0,2 cm} BEGIN

\hspace{0,5 cm}(11) \quad  \hspace{0,5 cm} select the first $(j,\vartheta)$ from $L$; $L:=L-\{ (j,\vartheta) \}$;

\hspace{0,5 cm}(12) \quad  \hspace{0,5 cm} FOR all $i \in N$ with $(i,j)\in A^{+}(\bar{f})$ DO

\hspace{0,5 cm}(13) \quad  \hspace{0,6 cm} FOR all $\theta$ with $\theta + h(i,j;\theta)=\vartheta$ DO

\hspace{0,5 cm}(14) \quad  \hspace{0,7 cm} IF($\tau(j,\vartheta)+h(i,j;\theta)<\tau(i,\theta)$) THEN

\hspace{0,5 cm}(15) \quad  \hspace{1 cm} BEGIN

\hspace{0,5 cm}(16) \quad  \hspace{1,3 cm} $\tau(i,\theta):=\tau(j,\vartheta)+h(i,j;\theta)$

\hspace{0,5 cm}(17) \quad  \hspace{1,3 cm} $\sigma(i,\theta):=(j,\vartheta)$

\hspace{0,5 cm}(18) \quad  \hspace{1,3 cm} IF($(i,\theta)\notin L$) THEN $L:=L\cup \{ (i,\theta) \}$;

\hspace{0,5 cm}(19) \quad  \hspace{1 cm}  END;

\hspace{0,5 cm}(20) \quad  \hspace{0,5 cm} FOR all $i \in N$ with $(i,j)\in A^{-}(\bar{f})$ DO 

\hspace{0,5 cm}(21) \quad  \hspace{0,6 cm} IF($\vartheta+h(j,i;\vartheta) \leq T$ and $\tau(j,\vartheta)-h(j,i;\vartheta)<\tau(i,\vartheta+h(j,i;\vartheta))$) THEN

\hspace{0,5 cm}(22) \quad  \hspace{1 cm} BEGIN

\hspace{0,5 cm}(23) \quad  \hspace{1,3 cm} $\tau(i,\vartheta+h(j,i;\vartheta)):=\tau(j,\vartheta)-h(j,i;\vartheta)$;

\hspace{0,5 cm}(24) \quad  \hspace{1,3 cm} $\sigma(i, \vartheta+h(j,i;\vartheta)):=(j,\vartheta)$;

\hspace{0,5 cm}(25) \quad  \hspace{1,3 cm} IF($(i, \vartheta+h(j,i;\vartheta))\notin L$) THEN $L:=L\cup \{ (i,\vartheta+h(j,i;\vartheta)) \}$;

\hspace{0,5 cm}(26) \quad  \hspace{1 cm} END; 

\hspace{0,5 cm}(27) \quad  \hspace{0,2 cm} END;

\hspace{0,5 cm}(28) \quad  $\bar{\tau}:=\min_{\theta \in \{ 0, 1, \cdots, T \}} \{ \tau(s,\theta)\}$; $\bar{\theta}:=\min_{\theta \in \{ 0, 1, \cdots, T \}} \{ \theta \vert \tau(s,\theta)=\bar{\tau}\}$;

\hspace{0,5 cm}(29) \quad  IF($\bar{\tau}=\infty$) THEN $C:=0$;

\hspace{0,5 cm}(30) END. 

\hspace{0,5 cm}

\hspace{1,5 cm} \small Algorithm 3: Labels Setting (LS) procedure. \end{flushleft}  \normalsize

\begin{theorem}\label{T1}[Correctness of PmDF algorithm] 
PmDF algorithm computes correctly a parametric minimum flow over time for a given time horizon $T$ and for $\lambda\in[0,\Lambda]$. 
\end{theorem}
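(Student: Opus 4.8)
The plan is to prove correctness in two layers: first reduce the flow‑over‑time problem to a static problem through the time‑space network $G^{T}$ of Definition~\ref{D1}, and then analyse the parametric behaviour of Algorithm~2 subinterval by subinterval. Via the classical correspondence between flows over time in $\bar G$ with horizon $T$ and static flows in $G^{T}$ (which carries over verbatim to lower bounds and to a fixed parameter value $\lambda$), computing a parametric minimum flow over time amounts to computing, for every $\lambda\in[0,\Lambda]$, a minimum static flow in $G^{T}$ with lower bounds $\bar\ell(\cdot;\cdot;\lambda)$. I would first record the starting invariant: the flow $\bar f$ returned by line~(03) is, because the surrogate lower bounds $\ell'(i,j;\theta)=\max_{\lambda}\bar\ell(i,j;\theta;\lambda)$ dominate $\bar\ell(i,j;\theta;\lambda)$ for every $\lambda$, simultaneously a feasible flow over time at \emph{every} parameter value, so all parametric residual capacities are well defined and nonnegative on $[0,\Lambda]$ at the outset of each call QDP$(k,\lambda_k,BP)$.

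The core of the argument is the analysis of one call QDP$(k,\lambda_k,BP)$, carried out by induction on the iterations of the \textsc{while} loop (lines~(10)--(26)), proving the following joint invariant on the currently narrowed subinterval $[\lambda_k,\lambda_{k+1}]$: (i) for every residual arc, $\bar r_k(i,j;\theta;\lambda)=\alpha_k(i,j;\theta)+\beta_k(i,j;\theta)(\lambda-\lambda_k)$ is its true parametric residual capacity and is nonnegative throughout the subinterval, and $\bar f_k=f_k+F_k(\lambda-\lambda_k)$ is the corresponding feasible parametric flow; (ii) the quickest dynamic path $\bar P$ returned by \textbf{LS} — whose transit‑time labels do not depend on $\lambda$ and whose earliest source time $\bar\theta$ respects the horizon through the restrictions in Definition~\ref{D5} and in line~(21) — is a quickest augmenting path in $\bar G(\bar f_k)$ for every $\lambda$ in the subinterval at once, and the quantity $\alpha+\beta(\lambda-\lambda_k)$ of line~(13) is exactly $\bar r(\bar P;\theta;\lambda)$ of Definition~\ref{D7}; (iii) line~(18) sets $\lambda_{k+1}$ to the smallest parameter value at which the linear representation of some arc of $\bar P$ would cease to be valid (the first value at which a non‑bottleneck arc would overtake the bottleneck or a residual capacity would reach zero), so that after the push of lines~(19)--(21) the invariant is re‑established on the new $[\lambda_k,\lambda_{k+1}]$. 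Termination of this inner loop follows from the shortest‑augmenting‑path (Edmonds--Karp‑type) monotonicity of the transit‑time labels in the finite network $G^{T}$. Granting the invariant, when \textbf{LS} finally returns $C=0$ the flow $\bar f_k(\cdot;\cdot;\lambda)$ admits no augmenting dynamic path for any $\lambda\in[\lambda_k,\lambda_{k+1}]$, so by the minimum‑flow theorem (absence of an augmenting path $\Leftrightarrow$ minimality) applied in $G^{T}$ at each such $\lambda$ it is a minimum flow over time there, its value $\bar v_k(\lambda)$ is affine on the subinterval, and $\lambda_{k+1}$ is recorded in $BP$.

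It remains to close the induction on the \textsc{repeat} loop of Algorithm~1: the elements added to $BP$ form a strictly increasing sequence $0=\lambda_0<\lambda_1<\cdots$, and since the minimum‑flow value $\bar v(\lambda)$ is, by parametric linear‑programming theory, a continuous piecewise‑linear function of $\lambda$ on the (everywhere feasible) parameter range, only finitely many iterations occur, $\lambda_K=\Lambda$ is reached, and the pieces $\big(\bar f_k,[\lambda_k,\lambda_{k+1}]\big)$ together exhibit a minimum flow over time for every $\lambda\in[0,\Lambda]$, which is precisely the output demanded by Definition~\ref{D4}.

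I expect the main obstacle to be part~(ii) of the subinterval invariant: showing that narrowing $[\lambda_k,\lambda_{k+1}]$ only according to the arcs lying on the paths actually discovered already guarantees that the \emph{whole} sequence of quickest augmenting paths — and, crucially, the final event that no augmenting path remains — is the same for every $\lambda$ in the subinterval. The delicate point is to rule out that, as $\lambda$ grows past $\lambda_k$, a residual arc lying on none of the discovered paths turns positive and opens an augmenting path invisible to the run at $\lambda_k$; this must be argued from the monotonicity of $\bar\ell(i,j;\theta;\lambda)$ in $\lambda$ together with the domination property of $\ell'$ used to build the initial feasible flow (so that the residual capacity of any such arc is already strictly positive, hence explored, at $\lambda_k$ whenever it will be positive later in the subinterval), and this is where the proof has to be most careful.
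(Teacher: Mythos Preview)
Your overall skeleton coincides with the paper's: reduce to the time-space network, argue that each call to QDP terminates with no $s$--$t$ path in the residual network, invoke the classical no-decreasing-path criterion for minimum flow on each subinterval, and glue the subintervals together. The paper's own proof is a single short paragraph that asserts exactly this and nothing more; in particular it does \emph{not} establish any of the invariants you set up, nor does it address the uniformity question you flag in your last paragraph. So on the level of strategy you are aligned with the paper, and on the level of rigour you are already far beyond it.

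Where your plan has a genuine gap is precisely the point you yourself single out as the obstacle, and the resolution you sketch does not go through. You propose to rule out the appearance of a new augmenting path for $\lambda>\lambda_k$ by appealing to ``the monotonicity of $\bar\ell(i,j;\theta;\lambda)$ in $\lambda$''. But $\bar\ell$ is not monotone: the parametric slope $\mathcal L(i,j;\theta)$ is allowed to have either sign (see e.g.\ $\mathcal L(1,2;0)=-2$ and $\mathcal L(1,3;0)=4$ in Table~\ref{tab1}), so forward residual capacities $\bar f-\bar\ell$ can increase with $\lambda$ just as easily as decrease. The domination property $\bar f\ge \ell'=\max_\lambda\bar\ell$ holds only for the \emph{initial} feasible flow produced in line~(03); once QDP has pushed flow along paths, the current $f_k$ need not dominate $\ell'$, so this argument is unavailable at the start of later subintervals and even within a single subinterval after the first push. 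Consequently your claimed mechanism does not prevent an arc with $\alpha_k(i,j;\theta)=0$ and $\beta_k(i,j;\theta)>0$ (equivalently $\mathcal L(i,j;\theta)<0$) from opening an augmenting path for $\lambda$ strictly inside $(\lambda_k,\lambda_{k+1}]$ that the LS run at $\lambda_k$ never sees. Since line~(18) only inspects arcs lying on the discovered path $\bar P$, nothing in the algorithm as written forces $\lambda_{k+1}$ to be shrunk when such an off-path arc becomes active. To close this, you would need a different argument---for instance, a cut-based certificate that the value function is affine on $[\lambda_k,\lambda_{k+1}]$, or an explicit check that LS explores every arc whose residual capacity is positive somewhere on the current subinterval rather than merely at $\lambda_k$---neither of which is supplied by the paper's proof either.
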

\begin{proof} The partitioning type algorithm iterates on successive subintervals $[\lambda_{k},\lambda_{k+1}]$, starting with $\lambda_{0}=0$ and ending with $\lambda_{k_{max}+1}=\Lambda$ and consequently, the correctness of the algorithm obviously follows from the correctness of the Quickest Dynamic Paths (QDP) procedure which ends when none of the source node-time pairs is reachable from any of the sink node-time pairs, i.e. when there is no dynamic path from the source node to the sink node in the time-depending residual network. According to the classical flow decreasing path theorem presented by \cite{1}, this means that the obtained flow is a minimum dynamic flow for the given time horizon. In fact, the algorithm ends with a set of linear parametric minimum flows and with the partition $BP$ of the interval of the parameter values in their corresponding subintervals. \end{proof}

\begin{theorem}\label{T2} [Time complexity of PmDF algorithm] 
The parametric minimum dynamic flow (PmDF) algorithm runs in $O(Kn^{2}mT^{3})$ time, where $K+1$ is the number of $\lambda$ values in the set $BP$ at the end of the algorithm. \end{theorem}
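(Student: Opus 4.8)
The plan is to multiply the number of outer iterations by the cost of one call to the \textbf{QDP} procedure, after checking that the preprocessing in line (03) of Algorithm~1 is not the bottleneck. The \textsc{REPEAT} loop of Algorithm~1 invokes \textbf{QDP} once for each index $k=0,1,\dots,k_{\max}$, and since at termination $BP=\{\lambda_0,\dots,\lambda_{k_{\max}+1}\}$ has $K+1=k_{\max}+2$ elements, this is exactly $K$ calls. The feasible flow in line (03) is computed in the static time-space network $G^{T}$, which has $|N^{T}|=O(nT)$ node-time pairs and $|A^{T}|=O(mT)$ arcs; by the results of \cite{1} this costs no more than a single maximum-flow-type computation on $G^{T}$, i.e.\ $O(n^{2}mT^{3})$, and is absorbed into the claimed bound (for $K\ge 1$). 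So it suffices to show that one call to \textbf{QDP} runs in $O(n^{2}mT^{3})$.

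Next I would bound the cost of one execution of the inner \textsc{LS} procedure, again working in $G^{T}$. The initialization in lines (03)--(08) of Algorithm~3 is $O(nT)$. Since \textsc{LS} is a label-setting scan, each node-time pair is made permanent at most once, so the selection of the minimum temporary label (line~11) is performed $O(nT)$ times; keeping the candidate list $L$ sorted naively makes each selection/insertion $O(nT)$, for a total of $O(n^{2}T^{2})$. The relaxation work is accounted per processed pair $(j,\vartheta)$: the forward scan (lines~12--19) runs over the in-arcs $(i,j)$ of $j$ and, for each, over the $O(T)$ values of $\theta$ with $\theta+h(i,j;\theta)=\vartheta$, giving $O(\delta^{-}(j)\,T)$ per pair and $O(mT^{2})$ overall, while the reverse scan (lines~20--26) contributes $O(mT)$. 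Hence one \textsc{LS} call costs $O(n^{2}T^{2}+mT^{2})=O(nmT^{2})$, in agreement with the cost of one shortest dynamic path search reported in \cite{4}. The remaining work of one pass through the \textsc{WHILE} body of Algorithm~2 -- building $\bar P$ from $\sigma$, computing $\alpha$ and $\beta$, traversing $\bar P$ while updating the parametric residual capacities and the flow (lines~12--23), and re-initializing $\sigma$ (line~24) -- touches $O(nT)$ node-time pairs and arcs, hence is $O(nT)$ and is dominated by \textsc{LS}. So one iteration of the \textsc{WHILE} loop costs $O(nmT^{2})$.

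Finally I would bound the number of iterations of that \textsc{WHILE} loop, that is, the number of quickest dynamic paths processed within one subinterval $[\lambda_{k},\lambda_{k+1}]$, by $O(nT)$; this is the heart of the proof. The \textbf{QDP} procedure is a shortest-augmenting-path method in the parametric residual of $G^{T}$, and with the node potentials $\pi(i,\theta):=\theta$ every residual arc of $G^{T}$ -- the forward arcs $(i,\theta)\to(j,\theta+h(i,j;\theta))$ and the reverse arcs with transit time $-h$ -- has zero reduced cost, so all transit-time labels $\tau(i,\theta)$ are integers bounded in absolute value by $T$ and the length $\bar\tau$ returned by successive calls to \textsc{LS} is non-decreasing. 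Combining the $O(T)$ possible values of $\bar\tau$ with the fact that each flow decrease along $\bar P$ zeroes the constant part $\alpha_{k}$ of at least one arc of $G^{T}$, one gets the bound, exactly along the lines of the accounting used for a single minimum flow over time in \cite{9} (each quickest path rendering at least one node-time pair permanently unreachable from the sink pairs). Consequently one call to \textbf{QDP} costs $O(Tm)+O(nT)\cdot O(nmT^{2})=O(n^{2}mT^{3})$, and multiplying by the $K$ calls yields $O(Kn^{2}mT^{3})$.

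I expect the main obstacle to be precisely this last step: making the $O(nT)$ bound on the number of quickest dynamic paths per subinterval rigorous, in particular controlling how often the same arc of $G^{T}$ can re-enter the bottleneck set when flow is pushed back along its reverse arc, and verifying that the extra parametric bookkeeping in lines (18)--(19) of Algorithm~2 (which also updates $\lambda_{k+1}$) does not increase this count. A secondary point to check is that the reverse-arc transit times $-h$ create no negative-length dynamic cycle in the residual of $G^{T}$, so that \textsc{LS} with permanent labels is correct; this again follows from the potential $\pi(i,\theta)=\theta$.
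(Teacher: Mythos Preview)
Your overall decomposition --- $K$ outer calls times the cost of one \textbf{QDP} --- matches the paper, and your analysis of \textbf{LS} is at least as careful as the paper's (you even account separately for the $O(mT^{2})$ relaxation work, which the paper subsumes in a blanket ``$nT$ adjacent node-time pairs per NTP'' giving $O(n^{2}T^{2})$). The genuine divergence is in how you bound the number of passes through the \textsc{WHILE} loop of \textbf{QDP}.

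The paper does \emph{not} attempt an $O(nT)$ bound on the number of quickest paths. Its argument is the one-line observation that each pass zeroes the constant part $\alpha_{k}(i,j;\theta)$ of the bottleneck arc at its time step, so one arc-time pair leaves the residual network; since there are $O(mT)$ such pairs, the loop runs $O(mT)$ times. Combined with $O(n^{2}T^{2})$ per \textbf{LS} call this already gives $O(n^{2}mT^{3})$ per \textbf{QDP}, with no appeal to monotone path lengths, potentials, or the accounting of \cite{9}.

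Your route instead tries for $O(nT)$ iterations and compensates with a looser $O(nmT^{2})$ per iteration. This is where the gap sits, and you rightly flag it. The sketch you give --- ``$O(T)$ possible values of $\bar\tau$'' together with ``each augmentation zeroes one $\alpha_{k}$'' --- does not by itself yield $O(nT)$: the first ingredient bounds the number of \emph{phases}, not the number of paths, and the second gives only $O(mT)$ paths in total, which is exactly the paper's bound. A Dinic-style argument in $G^{T}$ would give $O(nT)$ phases times $O(mT)$ blocking-flow augmentations, i.e.\ $O(nmT^{2})$ paths, not $O(nT)$; and the claim that ``each quickest path renders at least one node-time pair permanently unreachable'' is not true in general once reverse arcs can re-open saturated forward arcs. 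So as written, the $O(nT)$ step does not go through.

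The fix is painless: drop the $O(nT)$ claim, keep your own \textbf{LS} bound in the form $O(n^{2}T^{2}+mT^{2})=O(n^{2}T^{2})$ (valid since $m\le n^{2}$ in a simple graph), and use the paper's elementary $O(mT)$ iteration count. That recovers $O(mT)\cdot O(n^{2}T^{2})=O(n^{2}mT^{3})$ per \textbf{QDP} and hence $O(Kn^{2}mT^{3})$ overall, without the part of the argument you were worried about.
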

\begin{proof} The building, as well as the updating of the time-dependent residual network requires an $O(mT)$ running time, since for each of the time values $0\leq \theta \leq T$ all the $m$ arcs must be examined. Labels Setting (LS)  procedure investigates at most $nT$ adjacent node-time pairs for each of the node-time pairs which are removed from the list $L$ (i.e. $O(nT)$ times). Thus, the complexity of labels setting (LS) procedure is $O(n^{2}T^{2})$. Considering that in each of the iterations of the QDP procedure, for one of the time values, one arc is eliminated from the dynamic residual network, the algorithm end in at most $O(mT)$ iterations. On each of the iterations the procedure finds a quickest dynamic path with the complexity $O(n^{2}T^{2})$ and updates the time-dependent residual network in $O(mT)$ time. Thus, the total complexity of the quickest dynamic paths (QDP) procedure is $O(n^{2}mT^{3})$.\\
For each of the $K$ subintervals $[\lambda_{k},\lambda_{k+1}]$, $k=0,1,\cdots,K-1$ in which the interval $[0,\Lambda]$  of the parameter $\lambda$ values is partitioned in, the algorithm makes a call to procedure QDP. Consequently, the complexity of the parametric minimum dynamic flow (PmDF) algorithm is $O(Kn^{2}mT^{3})$. 
\end{proof}

\section{Example}
\label{sec:four}
In the discrete-time dynamic network presented in Figure~\ref{fig1}, node $1$ is the source node $s$ and node $4$ is the sink node $t$; the time horizon being set to $T=3$, i.e. $H=\{0, 1, 2, 3\}$. 

\begin{figure}[htbp]
  \begin{center}
    \subfigure[The discrete-time dynamic network $\bar{G}$. \label{fig1}]
		{\includegraphics[width=0.45\textwidth]{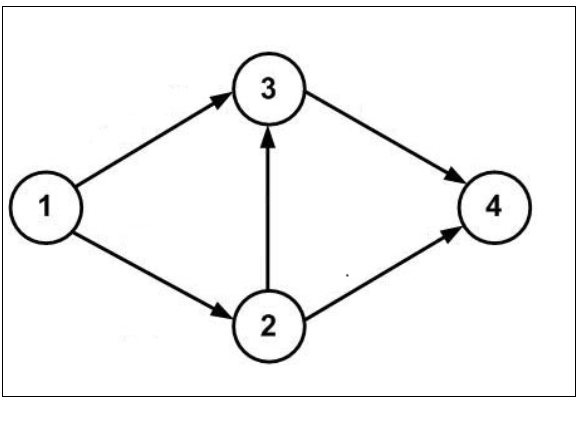}}
    \hfil
    \subfigure[The piecewise linear minimum flow over time value function for $\bar{G}$.\label{fig2}]
		{\includegraphics[width=0.4\textwidth]{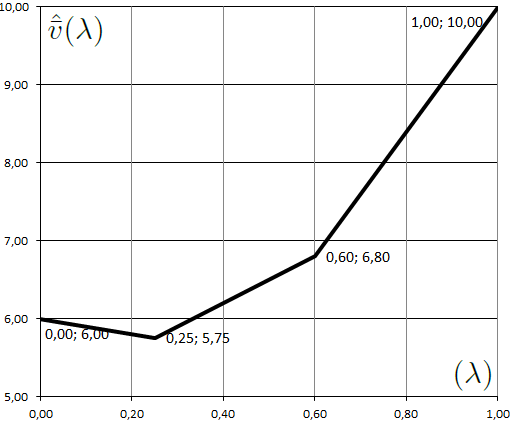}}
    \caption{Illustrating PmDF algorithm.}
    \label{fig}
  \end{center}
\end{figure}

For the interval of the parameter $\lambda$ values, set to $[0,1]$, i.e. $\Lambda=1$, the transit times $h(i,j;\theta)$, parametric lower bound functions $\bar{\ell}(i,j;\theta;\lambda)=\ell_{0}(i,j;\theta)+\lambda \cdot \mathcal{L}(i,j;\theta)$, upper bounds (capacities) $u(i,j;\theta)$ and feasible flow $\bar{f}(i,j;\theta)$ for all arcs in $\bar{G}$ are indicated in Table \ref{tab1}. 

\begin{table*}[h] 
\caption{\small Characteristics of dynamic network $\bar{G}$ presented in Figure \ref{fig1}}
\label{tab1}
\begin{center} \footnotesize
\begin{tabular}{l|lllcl} 
\hline
$(i,j)$ & $h(i,j;\theta)$ & $\ell_{0}(i,j;\theta)$ & $\mathcal{L}(i,j;\theta)$ & $u(i,j;\theta)$ & $\bar{f}(i,j;\theta)$ \\ \hline
$(1,2)$ & \begin{tabular}{ll} 1, & $\theta=0$ \\ 2, & $\theta \geq 1$ \end{tabular} & \begin{tabular}{ll} 3, & $\theta=0$ \\ 0, & $\theta \geq 1$ \end{tabular} & \begin{tabular}{ll} -2, & $\theta=0$ \\ 0, & $\theta \geq 1$ \end{tabular} & 5 & \begin{tabular}{ll} 5, & $\theta=0$ \\ 0, & $\theta \geq 1$ \end{tabular} \\ \hline
$(1,3)$ & \begin{tabular}{ll} 1, & $0\leq \theta <2$ \\ 2, & $\theta \geq 2$ \end{tabular} & \begin{tabular}{ll} 1, & $0\leq \theta <2$ \\ 0, & $\theta \geq 2$ \end{tabular} & \begin{tabular}{lll} 4, & $\theta=0$ \\ 1, & $ \theta=1$ \\ 0, & $\theta > 1$ \end{tabular} & 5 & \begin{tabular}{lll} 5, & $\theta =0$ \\ 2, & $\theta =1$ \\ 0, & $\theta \geq 1$ \end{tabular} \\ \hline
$(2,3)$ & \begin{tabular}{ll} 1,& $\theta \geq 0$ \end{tabular} & \begin{tabular}{ll} 0,& $\theta \geq 0$ \end{tabular} & \begin{tabular}{ll} 3, & $\theta=1$ \\ 0, & $\theta \neq 1$ \end{tabular} & 5 & \begin{tabular}{ll} 3, & $\theta=1$ \\ 0, & $\theta \neq 1$ \end{tabular} \\ \hline
$(2,4)$ & \begin{tabular}{ll} 1, & $0\leq \theta <2$ \\ 2, & $\theta \geq 2$ \end{tabular} & \begin{tabular}{ll} 0,& $\theta \geq 0$ \end{tabular} & \begin{tabular}{ll} 0,& $\theta \geq 0$ \end{tabular} & 5 & \begin{tabular}{ll} 2, & $\theta=1$ \\ 0, & $\theta \neq 1$ \end{tabular} \\ \hline
$(3,4)$ & \begin{tabular}{ll} 2, & $0\leq \theta <2$ \\ 1, & $\theta \geq 2$ \end{tabular} & \begin{tabular}{lll} 0, & $\theta=0$ \\ 2, & $ 0 < \theta \leq 2$ \\ 0, & $\theta > 2$ \end{tabular} & \begin{tabular}{ll} -2, & $ \theta =2$ \\ 0, & $\theta \neq 2$ \end{tabular} & 5 & \begin{tabular}{lll} 0, & $\theta =0$ \\ 5, & $0 < \theta \leq 2$ \\ 0, & $\theta > 2$ \end{tabular} \\ \hline
\end{tabular}	\end{center} \end{table*} 

\small
After the initialisation step, for $k=0$ and for the corresponding initial value of the parameter $\lambda_{0}=0$, procedure QDP is called for the first time. The successor vector is initialised for all nodes at all time values to $\sigma(i,\theta):=(0,0)$ and the Labels setting (LS) procedure is called for finding a quickest dynamic path in the time-dependent residual network $\bar{G}(\bar{f})$. 
The distance labels are initialised to $\tau(4,\theta):=0$, $\forall \theta \in \{ 0,1,2,3\}$ and the set of candidate nodes is set to $L:=\{(4,0),(4,1),(4,2),(4,3)\}$. After setting transit time labels and finding successor values for all node-time pairs, the procedure computes the minimum label of the source node,  $\bar{\tau}:=\min \{ \tau(1,0), \tau(1,1), \tau(1,2), \tau(1,3) \}=\min \{2, 1, \infty, \infty \}=1$ with $\bar{\theta}:=1$. Since $\bar{\tau}(4) \neq \infty$ the procedure LS ends with the variable $C$ keeping its initial value $C=1$. Based on successor vector, the quickest dynamic path $\bar{P}:=((1,1),(3,2),(2,1),(4,2))$ is built and its residual capacity $\bar{r}(\bar{P};\theta ; \lambda)=\alpha +(\lambda - \lambda_{k}) \cdot \beta$ is computed with $\alpha=\min \{ \alpha_{0}(1,3;1), \alpha_{0}(3,2;2), \alpha_{0}(2,4;1) \}=\min \{1,2,2 \}=1$ and $\beta=-1$. After the time-dependent residual network is updated and the parametric dynamic flow is decreased along the shortest dynamic path, the successor vector is reinitialised and procedure LS is called again. The next shortest dynamic path found by QDP procedure is $\bar{P}:=((1,0),(2,1),(4,2))$ with $\bar{r}(\bar{P};\theta ; \lambda)=1+\lambda$ and both the time-dependent residual network and the parametric dynamic flow are accordingly updated. Then QDP procedure finds the new dynamic path $\bar{P}:=((1,0),(3,1),(4,3))$ with $\alpha=3$ and $\beta=0$. Since $\beta_{0}(1,3;0)= -4 < \beta=0$, the upper limit $\lambda_{k+1}$ of the subinterval of the parameter is updated to $\lambda_{1}=\min \{ 1, (4-3)/(0+4) \}= 1/4$.  Finaly, after updating the time-dependent residual network and the parametric dynamic flow, based on successor vector found by LS procedure, QDP finds the last path $\bar{P}:=((1,0),(2,1),(3,2),(4,3))$ with $\alpha=1$ and $\beta=1$. The validity of the upper limit of the subinterval of the parameter values is tested and it is maintained unchanged since for the arc $(2,3)$ at $\theta=1$, $\beta_{0}(2,3;1)=-4<\beta=1$ but $\lambda_{1}=\min\{1/4,(4-1)/(1+4)\}=\min\{1/4, 3/5\}=1/4$. At this point, after the updating step, no other dynamic path can be found in the time-dependent residual network so that $C:=0$ is set, the breakpoints list is updated to $BP=\{0, 1/4\}$ and the first iteration ends with the minimum parametric flow over time $\hat{\bar{f}}_{0}(i,j;\theta;\lambda)$ computed for the subinterval $[0, 1/4]$ of the parameter $\lambda$ values. 
The PmDF algorithm increments variable $k$ to the value $k=1$ and a next iteration is performed. The evolution of the algorithm is presented in Table \ref{tab2}. 
The piecewise linear minimum flow over time value function for the discrete dynamic network $\bar{G}$, computed by \textit{Parametric minimum Dynamic Flow} (PmDF) algorithm, is presented in Figure~\ref{fig2}.

\begin{table*}[h]
\caption{\small The evolution of PmDF algorithm for the dynamic network $\bar{G}$}
\label{tab2}
\begin{center} \footnotesize
\begin{tabular}{c c c c c c}
\hline
 & $k$& $\lambda_{k}$& $\bar{P}$& $\bar{r}_{k}(\bar{P};\theta;\lambda)$& $\lambda_{k+1}$\\\hline
 & &&$(1,1),(3,2),(2,1),(4,2)$&$1-\lambda$&$1$\\\cline{4-6}
$Iteration \; 1:$ &$0$&$0$&$(1,0),(2,1),(4,2)$&$1+\lambda$&$1$\\\cline{4-6}
 & & &$(1,0),(3,1),(4,3)$&$3$&$1/4$\\\cline{4-6}
 & & & $(1,0),(2,1),(3,2),(4,3)$&$1+\lambda$&$1/4$ \\\hline
  
 & &&$(1,1),(3,2),(2,1),(4,2)$&$1-\lambda$&$1$\\\cline{4-6}
$Iteration \; 2:$ &$1$&$1/4$&$(1,0),(2,1),(4,2)$&$1+\lambda$&$1$\\\cline{4-6}
 & & &$(1,0),(3,1),(4,3)$&$4-4 \lambda$&$1$\\\cline{4-6}
 & & & $(1,0),(2,1),(3,2),(4,3)$&$1+\lambda$&$3/5$ \\\hline 

 & &&$(1,1),(3,2),(2,1),(4,2)$&$1-\lambda$&$1$\\\cline{4-6}
$Iteration \; 3:$ &$2$&$3/5$&$(1,0),(2,1),(4,2)$&$1+\lambda$&$1$\\\cline{4-6}
 & & &$(1,0),(3,1),(4,3)$&$4-4 \lambda$&$1$\\\cline{4-6}
 & & & $(1,0),(2,1),(3,2),(4,3)$&$4-4 \lambda$&$1$ \\\hline 
\end{tabular}
\end{center}
\end{table*}

\bibliographystyle{abbrvnat}
\bibliography{references}

\begin{thebibliography}{16}
\providecommand{\natexlab}[1]{#1}
\providecommand{\url}[1]{\texttt{#1}}
\expandafter\ifx\csname urlstyle\endcsname\relax
  \providecommand{\doi}[1]{doi: #1}\else
  \providecommand{\doi}{doi: \begingroup \urlstyle{rm}\Url}\fi

\bibitem[Ahuja et~al.(1993)Ahuja, Magnanti, and Orlin]{1}
R.~Ahuja, T.~Magnanti, and J.~Orlin.
\newblock \emph{Network Flows. Theory, algorithms and applications}.
\newblock Prentice Hall Inc., Englewood Cliffs, New Jersey, 1993.

\bibitem[Aronson(1989)]{2}
J.~Aronson.
\newblock A survey of dynamic network flows.
\newblock \emph{Annals of Operations Research}, 20\penalty0 (5):\penalty0
  1--66, 1989.

\bibitem[Avesalon et~al.()Avesalon, Ciurea, and Parpalea]{3}
N.~Avesalon, E.~Ciurea, and M.~Parpalea.
\newblock Maximum parametric flow in discrete-time dynamic networks.
\newblock \emph{Fundamenta Informaticae}, \penalty0 (accepted for publication).

\bibitem[Cai et~al.(2007)Cai, Sha, and Wong]{4}
X.~Cai, D.~Sha, and C.~Wong.
\newblock \emph{Time-varying Network Optimization}.
\newblock Springer, New York, 2007.

\bibitem[Fathabadi et~al.(2015)Fathabadi, Khezri, and Khodayifar]{5}
H.~S. Fathabadi, S.~Khezri, and S.~Khodayifar.
\newblock A simple algorithm for reliability evaluation in dynamic networks
  with stochastic transit times.
\newblock \emph{Journal of Industrial and Production Engineering}, 32\penalty0
  (2):\penalty0 115--120, 2015.
\newblock \doi{10.1080/21681015.2015.1015460}.

\bibitem[Hamacher and Foulds(1989)]{6}
H.-W. Hamacher and L.~Foulds.
\newblock Algorithms for flows with parametric capacities.
\newblock \emph{Zeitschrift für Operations Research - Methods and Models of
  Operations Research}, 33\penalty0 (1):\penalty0 21--37, 1989.

\bibitem[He et~al.(2015)He, Zheng, and Peeta]{7}
X.~He, H.~Zheng, and S.~Peeta.
\newblock Model and a solution algorithm for the dynamic resource allocation
  problem for large-scale transportation network evacuation.
\newblock \emph{Transportation Research Part C: Emerging Technologies},
  7:\penalty0 441--458, 2015.
\newblock \doi{10.1016/j.trpro.2015.06.023}.

\bibitem[Miller-Hooks and Patterson(2004)]{8}
E.~Miller-Hooks and S.~S. Patterson.
\newblock On solving quickest time problems in time-dependent, dynamic
  networks.
\newblock \emph{Journal of Mathematical Modelling and Algorithms}, 3\penalty0
  (1):\penalty0 39--71, 2004.

\bibitem[Nasrabadi and Hashemia(2010)]{9}
E.~Nasrabadi and M.~S. Hashemia.
\newblock Minimum cost time-varying network flow problems.
\newblock \emph{Optimization Methods and Software}, 25\penalty0 (3):\penalty0
  429--447, 2010.
\newblock \doi{10.1080/10556780903239121}.

\bibitem[Nassir et~al.(2014)Nassir, Zheng, and Hickman]{10}
N.~Nassir, H.~Zheng, and M.~Hickman.
\newblock Efficient negative cycle-canceling algorithm for finding the optimal
  traffic routing for network evacuation with nonuniform threats.
\newblock \emph{Transportation Research Record: Journal of the Transportation
  Research Board}, 2459:\penalty0 81--90, 2014.
\newblock \doi{http://dx.doi.org/10.3141/2459-10}.

\bibitem[Orlin(2013)]{11}
J.~B. Orlin.
\newblock Max flows in o(nm) time, or better.
\newblock In \emph{{Proceedings of the forty-fifth annual ACM symposium on
  Theory of computing}}, pages 765--774, New York, 2013. ACM Press.
\newblock \doi{10.1145/2488608.2488705}.

\bibitem[Parpalea and Ciurea(2011)]{12}
M.~Parpalea and E.~Ciurea.
\newblock The quickest maximum dynamic flow of minimum cost.
\newblock \emph{International Journal of Applied Mathematics and Informatics},
  3\penalty0 (5):\penalty0 266--274, 2011.

\bibitem[Parpalea and Ciurea(2013)]{13}
M.~Parpalea and E.~Ciurea.
\newblock Partitioning algorithm for the parametric maximum flow.
\newblock \emph{Applied Mathematics}, 4\penalty0 (10A):\penalty0 3--10, 2013.
\newblock \doi{10.4236/am.2013.410A1002}.

\bibitem[Rashidi and Tsang(2015)]{14}
H.~Rashidi and E.~Tsang.
\newblock \emph{Vehicle Scheduling in Port Automation: Advanced Algorithms for
  Minimum Cost Flow Problems, Second Edition}.
\newblock CRC Press, New York, 2015.

\bibitem[Ruhe(1991)]{15}
G.~Ruhe.
\newblock \emph{Algorithmic Aspects of Flows in Networks}.
\newblock Kluwer Academic Publishers, Dordrecht, The Netherlands, 1991.

\bibitem[Zheng et~al.(2015)Zheng, Chiu, and Mirchandani]{16}
H.~Zheng, Y.-C. Chiu, and P.~B. Mirchandani.
\newblock On the system optimum dynamic traffic assignment and earliest arrival
  flow problems.
\newblock \emph{Transportation Science}, 49\penalty0 (1):\penalty0 13--27,
  2015.
\newblock \doi{http://dx.doi.org/10.1287/trsc.2013.0485}.

\end{thebibliography}
\label{sec:biblio}

\end{document}